\documentclass[runningheads]{llncs}
\usepackage[T1]{fontenc}
\usepackage{graphicx}
\usepackage{amsmath}
\usepackage{amssymb}
\usepackage{booktabs}
\usepackage{hyperref}
\usepackage{subcaption}
\usepackage{siunitx}
\usepackage{tablefootnote}
\usepackage{comment}

\pdfoutput=1

\usepackage{orcidlink} 
\newif\ifextended
\extendedtrue   

\ifextended
  \newcommand{\extref}[2]{Appendix~\ref{#2}}
\else
  \newcommand{\extref}[2]{Appendix~#1 of the extended version~\cite{fice-extended}}
\fi

\ifextended
  \renewcommand{\orcidID}[1]{\orcidlink{#1}}
\fi

\begin{document}
\title{Fully Inductive Cardinality Estimation}
%
%
\author{%
  Tim Schwabe\inst{1}\orcidID{0009-0009-7957-603X} \and
  Lukas Ketzer\inst{1}\orcidID{0009-0008-6266-321X} \and
  Maribel Acosta\inst{1}\orcidID{0000-0002-1209-2868}%
}
 
\authorrunning{T. Schwabe et al.}
 
\institute{%
  Technical University of Munich, Munich, Germany\\
  \email{\{tim.schwabe,lukas.a.ketzer,maribel.acosta\}@tum.de}%
}
%

%
\maketitle              
\begin{abstract}
Query optimization of Basic Graph Patterns (BGP) SPARQL queries over Knowledge Graphs (KG) requires accurate cardinality estimation. Recently published learned estimators outperform statistics- and sampling-based approaches, but share a limitation preventing their adoption in real-world triplestores: they are \emph{transductive} and require retraining when the underlying graph changes or when applied to new graphs. We present \emph{FICE (Fully Inductive Cardinality Estimation)}, the first learned cardinality estimator for BGP queries over KGs that generalizes to entirely unseen graphs (including unseen relations), without any retraining. FICE is a graph neural network (GNN) with two coupled components. First, an encoder GNN over a factor-graph view of the KG produces entity and relation embeddings. We prove that BGP cardinality is a local function of the 2-hop neighborhood around bound terms in this view, motivating the local message-passing encoder. A decoder GNN then composes these embeddings along the join topology of the query to predict log-cardinality. The encoder and decoder are trained jointly, making the embeddings specialized for cardinality estimation. FICE is trained using neighborhood sampling to scale to KGs with millions of triples, and decouples embedding generation from cardinality decoding to enable estimation latency below a millisecond. Compared to learned and non-learned baselines over 10 KGs, FICE reduces the overall median q-error from 13.54 (for the best competitor) to 5.34 and dominates all approaches in tail behavior.

\keywords{Cardinality Estimation  \and SPARQL \and Graph Neural Networks \and Inductive Learning.}
\end{abstract}
\section{Introduction}

Cardinality estimation is a core task in cost-based query optimization for Basic Graph Patterns in  Knowledge Graphs (KGs). It is the primary signal to guide cost models in choosing a join ordering. Inaccurate cardinality estimates can lead to query plans that run orders of magnitude slower than those guided by accurate estimates~\cite{how-good}. Further, a cardinality estimator is invoked many times inside the query optimizer's plan search, so it must be both \emph{accurate} and \emph{efficient}. While triplestores rely on statistics and heuristics to estimate cardinalities, recent work on learned cardinality estimators has shown significantly better results by using neural networks trained in a supervised way on a representative set of queries.
However, current learned cardinality estimators~\cite{gnce,lss,davitkova2021lmkglearnedmodelscardinality} share a central limitation: they are \emph{transductive}. That is, they need to be retrained when applied to a new graph or when the underlying graph changes. This prevents their adoption in real-world triplestores, since training them is usually time-consuming and KGs are typically changing continuously. What is missing are approaches that are fully \emph{inductive}, i.e., they generalize to new entities, unseen relations, as well as completely unseen graphs without any retraining. While recent work has shown inductivity on link prediction tasks~\cite{ultra}, to our knowledge, no prior cardinality estimator has been built to generalize to new KGs without retraining. Cardinality estimation differs from pure link prediction, as it requires fine-grained higher-order co-occurrence statistics of entities and relations. Further, inductive link predictors typically condition a GNN on each query over the full graph at inference time, which is fundamentally incompatible with the sub-millisecond latency budget of a query optimizer.

Building on this gap in the literature, we present \emph{FICE (Fully Inductive Cardinality Estimation)}, an encoder-decoder graph neural network for cardinality estimation that generalizes to unseen graphs without any retraining. The encoder produces entity- and relation embeddings from a factor graph view of an RDF KG without requiring entity- or relation-specific parameters. The factor graph view (i) exposes relations as first-class vertices, so a GNN can produce relation embeddings with no relation-specific embeddings, (ii) preserves structural information, and (iii) turns cardinality into a local function of the 2-hop neighborhood around bound terms in a query, which makes a local GNN sufficient to extract cardinality-relevant information. The decoder transforms a graph view of the query, together with those embeddings, into a scalar cardinality estimate. Cardinality is determined by the join topology of the query and how relations/entities co-occur in the KG. The decoder learns to compose local KG statistics (from the embeddings) along the join structure of the query. Importantly, the encoder is trained end-to-end with the decoder, so the embeddings are optimized specifically for cardinality estimation, unlike prior work that uses general-purpose pretrained embeddings. Further, during online cardinality estimation, only the decoder is needed (together with stored embeddings generated offline with the encoder; analogous to building statistics in classical estimators), enabling sub-millisecond estimation latency. 

In a leave-one-graph-out evaluation over 10 KGs, FICE achieves the best median q-error across all compared baselines, dominates in tail behavior, and generates plans with better cost compared to the production-grade triplestore optimizer \emph{QLever}.

In summary, our contributions are:
\begin{enumerate}
    \item The first fully inductive learned cardinality estimator for BGP queries, built as an encoder-decoder GNN model over a factor-graph view of the RDF KG.
    \item A 2-hop locality theorem for BGP cardinality on the factor-graph view of the KG, which motivates this graph representation and the local GNN encoder.
    \item End-to-end learning of embeddings that are specialized towards cardinality estimation.
    \item Neighborhood-sampling training scheme and offline/online decoupling to enable training on KGs with multiple million triples while retaining sub-ms inference latency.
    \item An extensive evaluation on ten KGs and diverse query shapes comparing FICE to multiple non-learned as well as learned cardinality estimators.
    \item Extending FICE with a ranking loss to further improve its accuracy in join ordering in a production-grade triplestore.
\end{enumerate}

\section{Related Work}

\paragraph{Non-Learned Cardinality Estimation}
Traditionally, cardinality estimation for query optimization has been approached using non-learned models based on statistical summaries and histograms, or sampling.
WanderJoin (WJ)~\cite{wanderjoin}, which was extended to RDF in G-CARE~\cite{gcare}, samples random walks over a join data graph view of the KG and produces unbiased cardinality estimates. It can yield accurate estimates for queries with large cardinalities, but often fails to find valid walks for selective queries.
Summary-based approaches include Characteristic Sets (CSET)~\cite{cset}, which stores per-subject predicate sets and can accurately estimate star-shaped queries but assumes independence across star-shapes. Subsequent works have shown that CSETs can be approximated to compute them faster~\cite{DBLP:journals/semweb/HelingA23,cset-sampling}. SumRDF~\cite{sumrdf} interprets a graph summary under possible-world semantics to compute expected cardinalities.
The main limitations of these approaches are that they rely on hand-crafted statistics that cannot capture all correlations important for cardinality estimation or on sampling, which struggles particularly on selective queries.

\paragraph{Learned Cardinality Estimation}
Due to the limitations of classical cardinality estimators, neural network-based approaches have been proposed in the last decade.
In relational databases, a large body of work has been proposed for learned approaches such as plan selection, hint generation, and learning-to-rank over plans~\cite{acosta2024neuro,mscn,bao,neo,learned-design-space,neurocard,lero}. PRICE~\cite{PRICE} is a fully inductive approach for relational databases that uses hand-crafted, low-level statistical features (e.g., histograms), and thus cannot be directly applied to KGs. The overall idea of pretraining across multiple datasets is similar to FICE, but our approach differs significantly in its architecture, which is designed to directly capture higher-order co-occurrence patterns from the KGs.

KGs require different approaches than relational databases due to the natural graph representation and high interconnectedness of the data~\cite{acosta2024neuro}.
LMKG \cite{davitkova2021lmkglearnedmodelscardinality} uses an MLP over a flattened representation of the query's adjacency tensor and the binary vectors for the entities and relations in the KG to regress cardinalities. It is built around path and star queries and transductive, i.e., does not learn entity and relation embeddings, so all cardinality-dependent statistics must be trained into the model itself, limiting its accuracy on new graphs.
GNCE~\cite{gnce} and LSS~\cite{lss} both represent entities and relations with pretrained embeddings. GNCE uses RDF2vec~\cite{rdf2vec}, LSS uses ProNE~\cite{prone}, and GNCE additionally adds occurrence statistics. Both approaches employ a GNN over a graph representation of the query to regress cardinalities. Lastly, SPACE~\cite{SPACE} uses an RNN to estimate the cardinalities of path queries. While GNCE is positioned as semi-inductive and theoretically supports incremental entity additions, neither of them is designed to transfer to unseen KGs without retraining, and, opposed to FICE, their embeddings are general-purpose and not trained for cardinality estimation.

\paragraph{Inductive Learning Over Graphs}
Recently, inductive approaches for the task of \emph{link prediction} in KGs have been proposed. ULTRA~\cite{ultra} is fully-inductive with a related encoder-decoder architecture to FICE. Different from our Factor Graph, it uses a relation-graph that only captures whether relations co-occur and not how often, in addition to the normal KG. Further, every prediction requires running the model over the full graph, making it incompatible with the latency constraints of online cardinality estimation.


GraphSAGE~\cite{graphsage} demonstrated that inductive node embeddings can be learned by sampling local neighborhoods, which motivates the neighborhood sampling in our encoder.

\section{FICE: Our Approach}
In this section, we present our approach, \emph{FICE (Fully Inductive Cardinality Estimation)}, the first learned cardinality estimator for KGs designed to generalize to unseen nodes, edges, and graphs without retraining. We first define the objective of FICE, and then detail the model architecture and the training procedure. We model a KG as a finite directed edge-labeled multigraph $G=(V,R,E)$, with entities $V$, relations $R$, and triples $E$. A BGP $Q$ is a finite set of triple patterns $(x,y,z)$, whose components are either variables or bound terms $(V \cup R)$. \extref{A}{app:preliminaries} provides detailed definitions.

\subsection{Optimization Objective}
We assume a distribution over knowledge graphs, and a set of training graphs $\mathcal{G}_{train}=\{G_1,\ldots, G_n\}$ sampled from this distribution. Further, each training graph has a set of training queries $\mathcal{Q}_{G_i}=\{Q_{i,1}, \ldots, Q_{i,k}\}$ associated with it. We further assume a set of test graphs $\mathcal{G}_{test}=\{G'_1,\ldots, G'_m\}$ sampled from the same distribution such that $\mathcal{G}_{train} \cap \mathcal{G}_{test} = \emptyset$ along with a set of queries for each of those graphs $\mathcal{Q}_{G'_i}=\{Q'_{i,1}, \ldots, Q'_{i,k}\}$.

Assume $l$ is a loss (e.g., MSE). Our goal is to obtain a learned cardinality estimator $\widehat{C}$ by minimizing the training loss between predicted ($\widehat{C}$) and true ($||\cdot||$) cardinality:
\begin{equation}
    \mathcal{L}_{train} = \sum_{i=1}^{n} \sum_{j=1}^{k} l(\widehat{C}(G_i,Q_{i,j}), ||Q_{i,j}||_{G_i}),
\end{equation}

with the goal of minimizing the generalization loss over all test graphs
\begin{equation}
    \mathcal{L}_{generalization} = \sum_{i=1}^{m} \sum_{j=1}^{k} l(\widehat{C}(G'_i,Q'_{i,j}),||Q'_{i,j}||_{G'_i}).
\end{equation}
That is, we aim to train our estimator on the training graphs such that it performs well on unseen graphs (with unseen entities and predicates). Compared to previous approaches, we lift the cardinality estimation problem from learning distributions of queries over a fixed graph to over a mixture of graphs.

\subsection{Model Architecture}
To achieve the generalization described above, we cannot rely on models that encode specific KG information in their parameters, nor can we use fixed embeddings for entities and relations (e.g., RDF2Vec as in GNCE), as these do not inductively generalize. Instead, FICE operates directly on the raw graph structure (and additional statistics) to transform these into a cardinality estimate of a query. We realize this by introducing two coupled GNNs, the \emph{encoder GNN} $f^E$ and the \emph{decoder GNN} $f^Q$. $f^E$ encodes the KG entities and relations into contextualized embeddings, trained specifically for cardinality estimation. $f^Q$ uses those embeddings together with occurrence statistics and a graph view of the query to decode the cardinality of the query. Figure~\ref{fig:approach} shows the FICE architecture, and the encoding and decoding phases are detailed in the following.

\subsubsection{KG Encoding Phase}

\begin{figure}[t!]
\includegraphics[width=\textwidth]{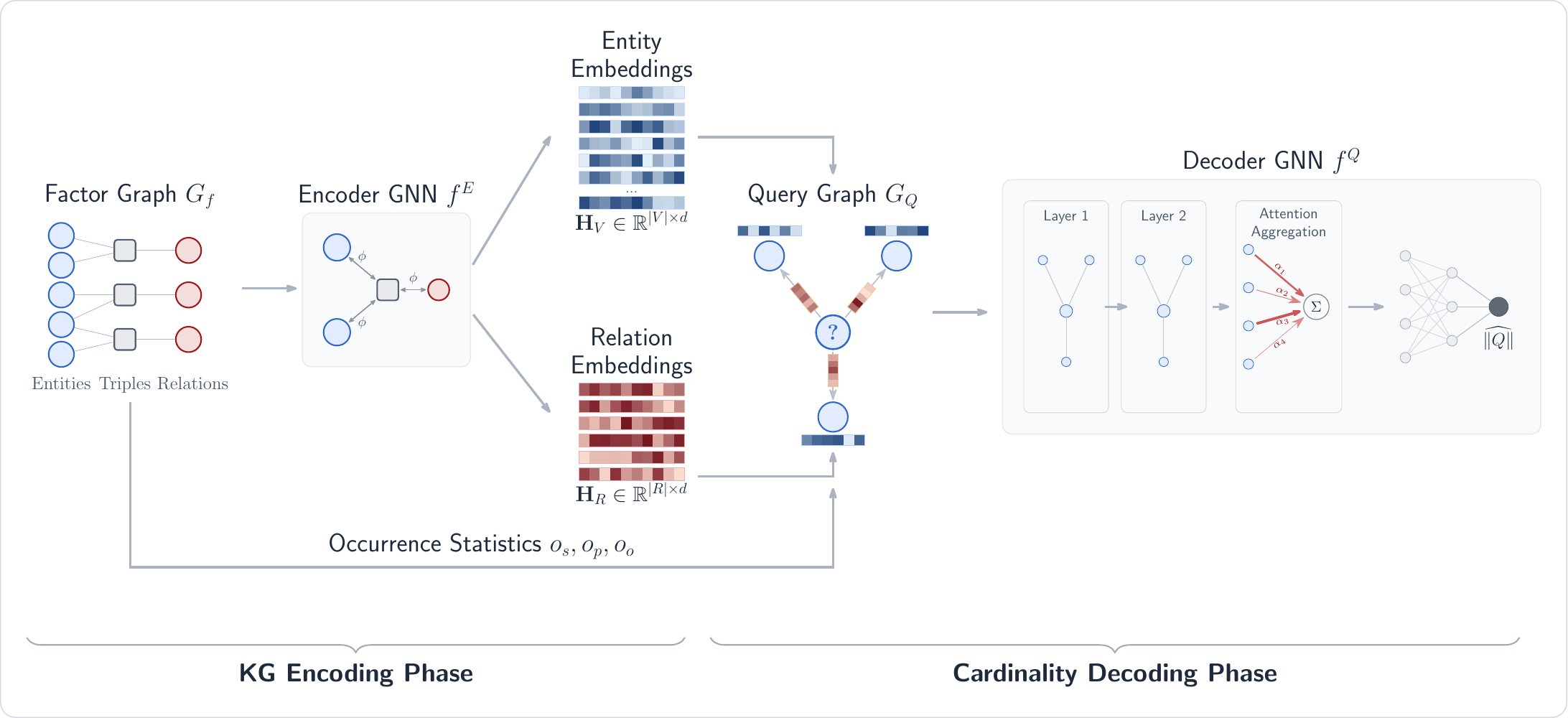}
\caption{Overview of the FICE architecture. The Encoder GNN produces embeddings for entities and relations based on a factor graph view of $G$. Importantly, the GNN consumes only the graph structure, not specific features, making it fully inductive. The Query GNN uses these embeddings alongside the query graph to produce a final cardinality estimate.} \label{fig:approach}
\end{figure}

To produce embeddings for entities and relations, we first transform $G$ into a factor graph view $G_f = (V_f, E_f)$ in which all triples $E$ and relations $R$ of the original graph are transformed into a node itself (yielding triple nodes $T$). The resulting node set is thus $V_f = V \cup R \cup T$. For each triple $(s,p,o) \in E$, we add three forward edges and their reverses to its triple node $t_{spo}$:
\begin{align}
    E_{fwd} &= \bigcup_{(s,p,o) \in E} \{ (t_{spo}, s), (t_{spo}, p), (t_{spo}, o)\} \\
    E_{rev} &= \bigcup_{(s,p,o) \in E} \{ (s, t_{spo}), (p, t_{spo}), (o, t_{spo})\}
\end{align}
and thus $E_f = E_{fwd} \cup E_{rev}$. Each edge has a role attribute $\rho$ that encodes the structural role and the edge direction:
\[
\rho(u,v) = \text{sgn}(u,v) \cdot r(u,v), \hspace{1cm} \rho \in \{-3,-2,-1,1,2,3\}\]
Here, $r$ encodes the structural role (1=subject, 2=predicate, 3=object) and sgn encodes the direction: it is positive if it is a forward edge and otherwise negative. For a KG with $|E|=|T|$ triples, $|R|$ relations and $|V|$ entities, the factor graph has $|V|+|R|+|T|$ nodes and $6|T|$ edges. 
There are several reasons to use a factor graph view. 
First, to obtain embeddings for relations using a standard message-passing GNN, the relations need to be represented as singular nodes, not scattered edges across the KG.

But most importantly, the factor graph makes the cardinality of a query a
\emph{local} function of the KG. To state this precisely, we define the
relevant neighborhood.
\begin{definition}[$n$-hop induced subgraph]\label{def:nhop}
Given the factor graph $G_f$ with nodes $V_f$ and a set $S \subseteq V_f$,
the \emph{$n$-hop induced subgraph} of $S$ is
\[
  \mathcal{N}_n(S) = G_f\Big[\{u \in V_f : \min_{b \in S} d_{G_f}(u,b) \le n\}\Big],
\]
where $d_{G_f}$ is the shortest-path distance in $G_f$.
\end{definition}
Next, note that there exists a bijection between the triple nodes in $G_f$ and the triples in $G$: $\pi: T \rightarrow E, t \mapsto (s(t), p(t), o(t))$. Defining $T_S = T \cap \mathcal{V}(\mathcal{N}_2(S))$ as the triple nodes inside the 2-hop subgraph, the following theorem holds:
\begin{theorem}[2-hop locality of cardinality in $G_f$]\label{theorem:2-hop}
Let $G=(V,R,E)$ be a finite KG with factor graph $G_f$, and let $Q$ be a BGP
with bound terms $S \subseteq V \cup R$ such that every triple pattern of $Q$
contains at least one bound term. Let $G_S := \pi(T_S)$ be the sub-KG induced
by the triples whose factor-graph nodes lie in $\mathcal{N}_2(S)$. Then
\[
  \lVert Q \rVert_G \;=\; \lVert Q \rVert_{G_S}.
\]
I.e., the 2-hop subgraph around the bound terms contains all information
needed to compute the cardinality of $Q$ exactly.
\end{theorem}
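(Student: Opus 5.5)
Because $G_S \subseteq G$ (as triple sets), the two cardinalities can be compared through the sets of matches. I will show that every match of $Q$ in $G$ already lives in $G_S$, and conversely. I work with the standard semantics: a solution is a mapping $\mu$ from the variables of $Q$ to terms such that $\mu(q)\in E$ for every triple pattern $q\in Q$. The cardinality is the number of such $\mu$, or the size of the bag of projections; either way it is determined by the set of valid $\mu$, so it suffices to prove the two solution sets coincide.

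\textbf{Step 1 (easy inclusion).} Any $\mu$ valid in $G_S$ is valid in $G$, since $\pi(T_S)\subseteq E$ and the condition $\mu(q)\in E$ is monotone in the triple set. One subtlety needs care. The domain of terms for $G_S$ consists of the entities and relations occurring in $\pi(T_S)$. Every term in the image of a valid $\mu$ occurs in some matched triple, so the term domain causes no discrepancy.

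\textbf{Step 2 (main inclusion).} Fix $\mu$ valid in $G$ and a pattern $q=(x,y,z)\in Q$. By hypothesis $q$ has at least one bound term $b\in S$, and $b$ is unchanged by $\mu$. Hence the matched triple $e=\mu(q)=(s,p,o)$ contains $b$ in some position. In $G_f$, the triple node $t=\pi^{-1}(e)$ is joined by an edge to each of $s,p,o$, in particular to $b$. So $d_{G_f}(t,b)=1\le 2$, which gives $t\in\mathcal N_2(S)$ and therefore $e\in\pi(T_S)$. As this holds for every $q$, $\mu$ is valid in $G_S$.

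In fact only the 1-hop neighborhood is needed to capture the triple nodes. The 2-hop radius additionally brings in the other endpoints $s,p,o$ of these triples as nodes of $\mathcal N_2(S)$. This is what makes $G_S$ a genuine sub-KG whose entities and relations are present in the induced subgraph, with the edges $t\to s,p,o$ and their roles $\rho$ retained. I will remark on this to justify why the statement uses the 2-hop neighborhood.

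\textbf{Expected obstacle.} The main obstacle is not combinatorial but definitional. I need to make precise that $\lVert Q\rVert_{G_S}$ is evaluated on the KG reconstructed from $\pi(T_S)$, with the subject, predicate and object of each triple recovered via the role-labelled edges inside $\mathcal N_2(S)$. I also need to check that bag semantics, or any projection, does not change the argument. Since both directions give a bijection, namely the identity, between solution mappings, multiplicities agree and the equality $\lVert Q\rVert_G=\lVert Q\rVert_{G_S}$ follows.
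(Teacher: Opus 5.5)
Your proof is correct and is essentially the paper's argument. Your Step~2 (each matched triple contains a bound term, so its triple node is adjacent to $S$) is the paper's $(\subseteq)$ inclusion and your Step~1 is its $(\supseteq)$ inclusion; the only difference is that you argue directly with full solution mappings of $Q$, whereas the paper argues with per-pattern solution sets combined by a join. For the reconstruction issue you flag, the paper uses that $G_f$ is bipartite between $T$ and $V\cup R$, so every $t\in T_S$ (not only the matched ones) is at odd, hence unit, distance from $S$, which places all three role-labelled endpoints of every triple of $G_S$ inside $\mathcal{N}_2(S)$.
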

The proof is given in \extref{B}{app:proof}. The theorem implies that, given a universal approximator, the exact cardinality can in principle be computed from the 2-hop subgraph only. The factor-graph representation enables this by localizing the query's bound relations. Note that the theorem shows sufficiency of available information, not learnability by FICE. Practically, it shows that the \emph{Encoder GNN} $f^E$ does not need to process the full graph, but only a subgraph around entities/relations of interest. Thereby, it can encode in which topological constellations (and with which frequencies) a given relation cooccurs with other relations from its local neighborhood in the factor graph, which are the important statistics for cardinality estimation. We model $f^E$ as a 4-layer relational GNN with role-specific \emph{GINE}~\cite{gine,gin} message passing, where separate message functions are used for subject, predicate, and object edges, and their outputs are aggregated. Details can be found in \extref{C}{app:architecture}. Also note that we include reverse edges (together with the direction indicator $\text{sgn}$) so that messages can flow in both directions when processed with $f^E$.

For the initial features of the nodes in the factor graph, we need to choose values that are generalizable to other KGs, yet remain informative and capable of breaking symmetry between nodes (so nodes remain distinguishable). Thus, initial features are represented as $x_{v}=\big[\mathbf{z}_{base}||\mathbf{z}_{deg}\big]$. Here, $\mathbf{z}_{base} \in \mathbb{R}^1$ is a scalar with a constant value of 0.1, introduced so that the model has more dimensions to represent the entities/relations, and $\mathbf{z}_{deg} = \log(1+\text{deg}(v))$ is the logarithm of the degree of node $v$. This raw feature vector is then projected to the hidden dimension via a linear layer, followed by layer normalization and a SiLU activation.

The encoder GNN $f^E$ is applied on the factor graph supplied with those features to produce embeddings for the entities and relations (see Figure~\ref{fig:approach}). The goal of the embeddings is to contextualize their local neighborhoods and encode higher-order co-occurrence statistics (e.g., star motifs, path motifs, etc.) that are informative for the downstream task of cardinality estimation. We denote the resulting $d$-dimensional entity embeddings as $\mathbf{H}_V(G) \in \mathbb{R}^{|V| \times d}$, and correspondingly the relation embeddings as $\mathbf{H}_R(G) \in \mathbb{R}^{|R| \times d}$.

\subsubsection{Cardinality Decoding Phase}
Let $Q$ be a query and $V^b_Q \subseteq V_Q$ and $E^b_Q \subseteq E_Q$ be the bound nodes and edges in the query. Then, $\phi_V: V^b_Q \rightarrow V$ and $\phi_R: E^b_Q \rightarrow R$ map those to the corresponding entities and relations in $G$. Denote the \emph{query graph} of $Q$ as $G_Q = (V_Q, E_Q)$. Each distinct entity or variable in $Q$ becomes
a node in $V_Q$, and each triple pattern $tp=(x,y,z) \in Q$ becomes a directed edge from
$x$ to $z$ in $E_Q$, labeled with predicate (or variable) $y$.

To estimate the cardinality of a BGP query, we first featurize this query graph with embeddings produced by $f^E$ for all bound entities and relations. Nodes and edges
representing variables are represented with a zero vector of the same size as the embeddings.
To further support cardinality estimation, each bound-term embedding is concatenated with three
dimensions that encode its role-specific occurrences in $G$. Formally, for a term $v$, we define
the occurrence vector $(o_s,o_p,o_o)$ where
$o_s = |\{(s,p,o) \in G \ \text{s.t.} \  s=v\}|$, $o_p = |\{(s,p,o) \in G \ \text{s.t.} \  p=v\}|$,
and $o_o = |\{(s,p,o) \in G \ \text{s.t.} \  o=v\}|$.
The occurrence counts are log-transformed using $\log(1+\cdot)$.
These values are informative for cardinality estimation as they represent an upper bound on
the possible cardinality (via a Cartesian product). 

Using the embeddings produced by the encoder together with the occurrence statistics, we define the node and edge features for $f^Q$ as
\[
\mathbf{X}^Q_V(G,Q)[v]
=
\begin{cases}
\mathbf{H}_V(G)[\phi_V(v)] \,\Vert\, o_s(\phi_V(v)) \,\Vert\, o_p(\phi_V(v)) \,\Vert\, o_o(\phi_V(v)) & ,v\in V_Q,\\
\mathbf{0} \ \ \ ,\text{otherwise (variables)}.
\end{cases}
\]

\[
\mathbf{X}^Q_E(G,Q)[e]
=
\begin{cases}
\mathbf{H}_R(G)[\phi_R(e)] \,\Vert\, o_s(\phi_R(e)) \,\Vert\, o_p(\phi_R(e)) \,\Vert\, o_o(\phi_R(e)) & ,e\in E_Q,\\
\mathbf{0} \ \ \ ,\text{otherwise (variables)}.
\end{cases}
\]

The log-cardinality of a BGP query $Q$ is now obtained by applying $f^Q$ on $G_Q$ and the corresponding embeddings
\begin{equation}
    \widehat{C}(G,Q) = f^Q[f^E](G,Q) = f^Q \big(G_Q;\mathbf{X}^Q_V(G,Q),\mathbf{X}^Q_E(G,Q)\big).
\end{equation}
We model $f^Q$ as a 2-layer MPNN with \emph{GINE} message passing and SiLU activations. After message passing, we apply attention pooling~\cite{attention-pooling} over all node representations to obtain a single query-level vector. We use attentional pooling rather than simple sum pooling to let the model focus on the variable nodes and the most important nodes for the overall cardinality. This vector is concatenated with $\log(1+|E|)$ (the log-transformed number of triples in $G$), providing the model with a notion of graph scale. The resulting vector is fed into a two-layer MLP with SiLU activation to produce the final scalar log-cardinality estimate.

\subsection{Training and Inference}
\paragraph{Training}
The FICE model is fully differentiable. That means the loss between predicted and true log-cardinality can be backpropagated not only to $f^Q$, but also to $f^E$. Hence, the embeddings that $f^E$ produces are explicitly optimized to be predictive for cardinality estimation, opposed to previous approaches that use general-purpose embeddings~\cite{gnce}. Furthermore, by training on a sufficiently large amount of different KGs, the embeddings can become graph-agnostic and allow the model to generalize to unseen graphs (that follow a similar distribution in terms of graph topology). We train FICE using the AdamW optimizer~\cite{adamw} with a learning rate of 1e-4 and Smooth L1 loss (Huber loss) between predicted and true log-cardinality. We predict the log-cardinality rather than the raw cardinality, since the raw cardinality in realistic workloads spans many orders of magnitude, leading to training instability.

Applying $f^E$ on the full graph is both too costly (for KGs with millions of edges) and unnecessary for two reasons. First, during training, one does not need embeddings for all entities/relations, but only for those appearing in the current batch of queries. Secondly, one only needs the local graph around these according to Theorem~\ref{theorem:2-hop}. Hence, we utilize \emph{neighborhood sampling} during training, where for each batch, we sample the local 4-hop neighborhood around all target nodes occurring in the batch. We further restrict the number of nodes included in each hop to 10 (small numbers of neighbors per hop have been found to be sufficient in GraphSAGE~\cite{graphsage}). This massively reduces GPU-VRAM requirements and practically enables scaling FICE training to KGs with millions of nodes.
\paragraph{Inference}
Once the model is trained, it can be applied to unseen graphs without any further training.
Instead, one only needs to generate embeddings for all entities and relations in the graph
(using $f^E$) offline (akin to statistics generation in classical summary- and sampling 
based approaches). Crucially, this step can be decoupled from online cardinality estimation for queries.
Once the embeddings are generated, one only needs to use $f^Q$ (and a fast lookup of the stored embeddings)
to estimate query cardinalities. As we show in Section~\ref{sec:experiments}, this results in a sub-ms latency, compatible with 
the requirements of query optimizers. Updates of the underlying KG only require regenerating embeddings within two hops of the affected entities/relations, similar to statistics in classical estimators.

\section{Experimental Study}
\label{sec:experiments}
\paragraph{Datasets and Queries}
To evaluate the different cardinality estimators, we use 10 KGs from related approaches~\cite{davitkova2021lmkglearnedmodelscardinality,gnce} for cardinality estimation and link prediction, as well as from Pykeen~\cite{pykeen}. Table~\ref{tab:queries} presents the statistics of the used datasets. 
We generated a wide range of different query shapes for all datasets using an open-source tool for BGP query generation~\cite{rdf-subgraph-sampler}. For some datasets and query types, the tool has returned no results. A detailed description of the used datasets is given in \extref{E}{ap:datasets}.


\begin{table}[t]
  \centering
  \caption{Number of KG triples and queries per dataset and query type.}
  \label{tab:queries}
  \scriptsize
  \begin{tabular}{lrrrrrrrrr}
    \toprule
    \textbf{Dataset} & \textbf{Triples} & \textbf{Star} & \textbf{Path} & \textbf{Cycle} & \textbf{Flower} & \textbf{Diamond} & \textbf{Snowflake} & \textbf{Tree} & \textbf{Path+Star} \\
    \midrule
    WN18RR         & 10K &   7,249 &  20,000 &    45 & 1,000 &   116 & 1,000 & 1,000 &   --- \\
    FB15K237       & 34K &  20,000 &  32,728 &   908 & 4,958 & 5,000 & 4,738 & 4,988 & 5,000 \\
    SWDF            & 242K & 20,000 & 20,000 &   138 & 4,605 & 1,346 & 4,438 & 4,901 & 4,929 \\
    CoDEx-L         & 612K &  20,000 &  20,000 &   191 &    22 & 5,000 &   --- & 4,351 & 3,755 \\
    DBpedia100k     & 698K &  20,000 &  20,000 & 1,010 & 4,982 & 5,000 & 4,812 & 5,000 & 4,976 \\
    AIDS            & 802K &  20,000 &  20,000 &    10 & 1,860 &    28 & 2,468 &   792 &   145 \\
    Hetionet        & 2.2M &   6,059 &  20,000 &   --- &   --- &   --- &   --- &   --- &   --- \\
    LUBM            & 2.6M & 20,000 &  20,000 &   --- & 5,000 &    32 & 5,000 & 5,000 &   212 \\
    WIKIDATA        & 21M & 20,000 & 20,000 & 5,000 & 4,256 & 4,991 & 4,280 & 4,826 & 4,358 \\
    YAGO            & 58M &   5,008 &   6,000 &   --- &   --- &   --- &   --- &   --- &   --- \\
    \bottomrule
  \end{tabular}
\end{table}
\paragraph{Compared Baselines}
We compared FICE to several non-learned and learned baselines for BGP cardinality estimation. For non-learned baselines, we compared against Characteristic Sets (CSET),Wanderjoin (WJ) and SumRDF; for learned ones against GNCE, LMKG, and LSS. 


For a fair comparison, since the non-learned baselines do not train on a new graph but only build statistics/samples, we run them as intended: first build the statistics per KG, then run the cardinality estimation using these statistics per graph. For the learned approaches (including FICE), we run leave-one-out experiments to determine the inductive performance. That is, if we want to obtain the metrics for a learned approach on the $n$-th dataset, we train on all other $n-1$ datasets, and solely evaluate on the $n$-th dataset. This way, the models are guaranteed never to have seen the entities or relations in this graph (as the entities and relations between the KGs are disjoint). Since Wikidata and CoDEx-L may have overlapping relations, we do not train on one and then evaluate on the other; instead, we leave both out simultaneously. Note, however, that even if entities/relations overlap between graphs, their IRI/identity is never exposed to the approaches. FICE, GNCE, and LSS share the same high-level
encoder–decoder split: a per-graph entity/relation embedding stage
(computed by $f^E$, RDF2Vec, and ProNE respectively) followed by a
graph-agnostic decoder trained jointly on the queries of the $n{-}1$
training KGs. For LMKG, we re-index the entity and predicate
vocabularies per graph (incrementing offsets across graphs) and train a
single MLP on the union vocabulary, since LMKG has no separable encoder
stage. We view this approach to handling the compared learned approaches as maximally beneficial to those methods, and it further allows us to understand their potential for inductive cardinality estimation.

\paragraph{Implementation}
FICE is implemented in Python using PyTorch Geometric~\cite{pyg}, and we trained it on a machine with 2.3 TB of RAM and an NVIDIA H200 with 140 GB of VRAM. We used a learning rate of 1e-4 and a batch size of 32. The neighborhoods around entities and relations for the current query batch are sampled to a depth of 4 (with a maximum of 10 neighbors per node). The hidden dimension of both the decoder and encoder GNNs is 128, and the models are trained for 100 epochs.

\begin{table}[t]
  \centering
  \caption{Median q-error ($qE \downarrow$) and Pearson correlation ($r \uparrow$) between true and predicted cardinality. Best result per row in bold.}
  \label{tab:table-inductive}
  \scriptsize
  \setlength{\tabcolsep}{2.5pt}
  \begin{tabular}{l*{14}{r}}
    \toprule
      & \multicolumn{2}{c}{\textsc{FICE}}
      & \multicolumn{2}{c}{\textsc{GNCE}}
      & \multicolumn{2}{c}{\textsc{LMKG}}
      & \multicolumn{2}{c}{\textsc{LSS}}
      & \multicolumn{2}{c}{\textsc{WJ}}
      & \multicolumn{2}{c}{\textsc{CSet}}
      & \multicolumn{2}{c}{\textsc{SumRDF}} \\
    \cmidrule(lr){2-3} \cmidrule(lr){4-5} \cmidrule(lr){6-7}
    \cmidrule(lr){8-9} \cmidrule(lr){10-11} \cmidrule(lr){12-13} \cmidrule(lr){14-15}
    KG & $qE$ & $r$ & $qE$ & $r$ & $qE$ & $r$ & $qE$ & $r$ & $qE$ & $r$ & $qE$ & $r$ & $qE$ & $r$ \\
    \midrule
    WN18RR      & \textbf{3.25} & \textbf{.86} & 3.46  & .84 & 7.15   & .71 & 17.63  & .35 & 39.59         & .15          & 107.00 & $-.07$ & 16.00  & .38 \\
    FB15K237    & \textbf{5.27} & \textbf{.78} & 16.44 & .48 & 7.75   & .59 & 15.00  & .49 & 63.46         & .17          & 211.96 & .05    & 7.35   & .78 \\
    SWDF        & \textbf{2.45} & \textbf{.94} & 4.28  & .81 & 5.24   & .57 & 23.00  & .86 & 30.51         & .50          & 334.51 & $-.08$ & 14.00  & .73 \\
    CoDEx-L     & \textbf{8.19} & .74          & 10.31 & .61 & 18.66  & .56 & 23.00  & .45 & 28.08         & .29          & 359.00 & $-.01$ & 13.12  & \textbf{.77} \\
    DBpedia100k & \textbf{3.22} & \textbf{.85} & 7.56  & .74 & 6.79   & .69 & 11.00  & .52 & 41.39         & .26          & 255.00 & .00    & 123.20 & .38 \\
    AIDS        & \textbf{1.70} & \textbf{.93} & 4.00  & .92 & 78.64  & .66 & 7.00   & .80 & 26.67         & .21          & 758.76 & $-.20$ & 3.00   & .64 \\
    Hetionet    & \textbf{5.91} & \textbf{.83} & 61.86 & .70 & 304.29 & .69 & 129.78 & .50 & 36.30         & .33          & 346.56 & .05    & 463.56 & .54 \\
    LUBM        & 2.07          & \textbf{.97} & 3.06  & .93 & 8.41   & .38 & 31.13  & .86 & 9.92          & .59          & 98.83  & $-.13$ & \textbf{2.00} & .43 \\
    WIKIDATA    & 16.48         & .61          & 15.76 & \textbf{.70} & \textbf{5.99}  & .51 & 8.00 & .63 & 13.04 & .50 & 309.00 & $-.12$ & 9.00 & .66 \\
    YAGO        & 4.85          & \textbf{.86} & 8.69  & .85 & 80.19  & .48 & 14.71  & .20 & \textbf{2.92} & .70          & 197.56 & .13    & 21.42  & .50 \\
    \addlinespace
    \midrule
    Overall     & \textbf{5.34} & \textbf{.84} & 13.54 & .76 & 52.31  & .58 & 28.02  & .57 & 29.19         & .37          & 297.82 & $-.04$ & 67.27  & .58 \\
    \bottomrule
  \end{tabular}
\end{table}

\subsection{Inductive Cardinality Estimation}
\label{subsec:card-est}
We first investigate how FICE and the baseline cardinality estimators generalize to an unseen KG after being trained on the other KGs (for learned estimators). 
Table~\ref{tab:table-inductive} shows the median q-error (i.e. $\max(\widehat{C}/||Q||, ||Q||/\widehat{C})$ of the compared approaches as well as the Pearson correlation coefficient between predicted and true cardinality on all tested datasets and the macro-average across all KGs. FICE achieves the lowest overall q-error, with a nearly 2× margin over the second-best approach. Furthermore, it has the highest correlation with the true cardinality. This is desired because the predicted and true cardinalities should ideally follow a linear relationship. 

Across datasets, FICE outperforms the other approaches in both metrics. On YAGO, the largest KG, FICE beats all learned approaches by a large margin, underperforming only Wanderjoin in q-error, while still achieving the best correlation with the true cardinality. On LUBM, only SumRDF performs better than FICE, since its summaries work well on the regular synthetic structure of LUBM. Wikidata is the only graph where FICE underperforms multiple baselines. LMKG is not directly comparable, as its limitations restrict the reported metrics to star- and path-queries. These cases can be explained by FICE's training scheme and architecture. Both Wikidata and YAGO exhibit queries with much larger cardinalities than the other graphs. Since both the encoder and decoder of FICE are jointly trained on datasets that differ significantly from those cardinalities, it underperforms on them, as they entail a substantial distribution shift in graph size and query cardinality. GNCE and LSS are only partially influenced by this, since their embeddings are pretrained individually for each graph. Wanderjoin has no training step and is therefore not influenced by shifts in the distribution of graph topologies. Furthermore, because of its random walk approach, it is especially suited to denser graphs and queries with large cardinalities.  Secondly, FICE's encoder performs dense neighborhood aggregation, which can lead to oversmoothing on dense graphs. In contrast, RDF2Vec (used by GNCE) learns over singular random walks of the graph, which helps it to retain more signal for individual entities/relations in dense graphs. Both of these shortcomings of FICE can be corrected in future work, as detailed in Section~\ref{sec:conclusion}.

\begin{figure}[t!]
\includegraphics[width=\textwidth]{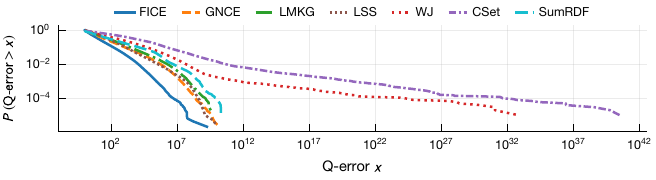}
\caption{Complementary cumulative distribution function (CCDF) of q-errors across all test queries and datasets, showing which fraction of queries have a q-error larger than $x$. FICE first-order stochastically dominates all baselines (no lines cross it) with the largest margin at the high-error tail.} \label{fig:ccdf-overall}
\end{figure}

\paragraph{Tail Behavior of Estimates}
To further understand the distributions of q-errors, Figure~\ref{fig:ccdf-overall} shows the complementary cumulative distribution function (CCDF) of q-errors, calculated over all datasets. Each point on a line shows the fraction of queries with a q-error greater than $x$. The more to the left side a line is, the better. FICE \emph{first-order stochastically dominates} all studied approaches, since no other approach crosses it to the left. That means that, for every specific q-error, strictly fewer queries in FICE have an error larger than this than in the other approaches. Figure~\ref{fig:ccdf-overall} further shows large differences in the tail behavior of the compared approaches, i.e., the rightmost end of the curves. CSET and WJ exhibit catastrophic failure modes with very large misestimates as tail behavior, while SumRDF tail behavior is close to the learned approaches. FICE exhibits the best tail behavior, orders of magnitude better than the other approaches (e.g., for q-error probabilities of 1\% and lower), and the lowest maximal q-error.
\begin{figure}[t!]
  \centering
  \begin{subfigure}[t]{\textwidth}
    \includegraphics[width=\textwidth]{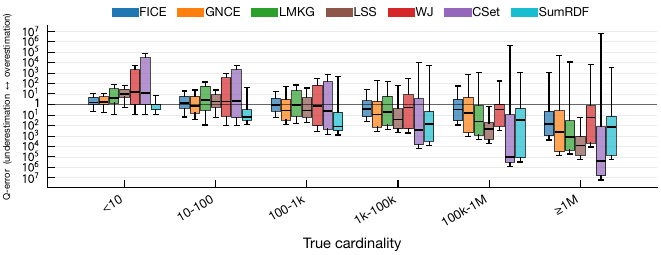}
    \caption{Q-error distribution per query cardinality}
    \label{fig:boxplot-inductive-cardinality}
  \end{subfigure}
  \\[0.6em]
  \begin{subfigure}[t]{\textwidth}
    \includegraphics[width=\textwidth]{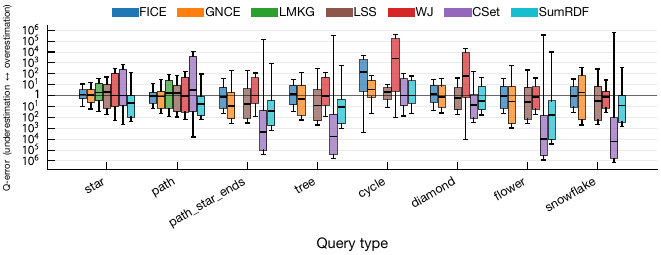}
    \caption{Q-error distribution per query shape}
    \label{fig:boxplot-inductive-query-shape}
  \end{subfigure}
  \caption{Signed q-error boxplots across all compared approaches}
  \label{fig:boxplot-inductive}
\end{figure}

\paragraph{Comparison of Cardinality Regimes}
Figure~\ref{fig:boxplot-inductive-cardinality} shows signed boxplots of the q-errors grouped by different ranges of true cardinality. The extension of the boxes above the centerline indicates overestimation by the methods, while the extension below indicates underestimation. All approaches tend to underestimate for larger true cardinalities, while being more accurate and balanced for lower ones. For most ranges, FICE exhibits the lowest interquartile ranges and (consistent with Figure~\ref{fig:ccdf-overall}) lowest tail-behavior (indicated by the whiskers). Importantly, it exhibits the lowest underestimation across all cardinality regimes. For the largest cardinalities ( $\geq$ 1M), WJ has the best median, while for small cardinalities, it behaves significantly worse than all learned approaches. Again, this is consistent with the random walk approach of WJ: for small cardinalities, it struggles to find samples satisfying the query, while for large ones it finds plenty and the estimate quickly converges to the true cardinality.

\paragraph{Comparison of Query Shapes}
Figure~\ref{fig:boxplot-inductive-query-shape} shows the results per query shape. For all shapes except cycle, flower and snowflake queries, FICE exhibits the best tail behavior, median, and interquartile range, confirming its generalizability across diverse and complex shapes. The inferior behavior compared to GNCE and LSS on cycle queries can be explained by Table~\ref{tab:queries}: there are significantly fewer cycle queries in the query dataset than in the other shapes, leading the encoder to focus on performing well on other query shapes. In contrast, the embeddings of GNCE and LSS learn independently of the query shape distribution and can thus assign equal weight to encoding cycle information as for other query shapes. WJ's behavior is heterogeneous, but also shows the worst results for cycle and diamond queries, while being equivalent to FICE on flower and snowflake queries.

\paragraph{Validating Structural Inductivity}\label{sec:struct-div}
To validate whether FICE is really inductive and generalizes to new graphs with different structure or whether the results can be explained by potential schema/structural overlap between the KGs, we correlate the per-KG median q-error with (i) the mean Jensen-Shannon divergence (JSD) between the held-out KG's structural metrics (entity-degree distribution, characteristic-set size, characteristic-set multiplicity and predicate-frequency skew) and the equal-weight mixture of them of the remaining nine KGs, (ii) log graph size, and (iii) mean JSD corrected for log graph size.

On the structural metrics, the graphs split into an open-domain cluster (FB15K237,
DBpedia100k, Wikidata, CoDEx-L, YAGO, SWDF; expected, as several derive from related ecosystems)  and a
structurally distinctive group (AIDS $0.57$, LUBM $0.38$, Hetionet $0.22$,
WN18RR $0.21$; also plausible, as these cover bio-related knowledge, synthetic data, and language taxonomies).
FICE's per-KG log median q-error correlates
\emph{negatively} with $\overline{\mathrm{JSD}}$ ($\rho=-0.65$, $p=0.04$),
also after partialling out log graph size ($\rho=-0.70$, $p=0.04$), while its
correlation with graph size alone is not significant ($\rho=0.34$, $p=0.34$). Thus, FICE does not degrade on structurally novel graphs. On the contrary, LMKG, the one baseline that explicitly depends on entity vocabulary, correlates positively with $\overline{\mathrm{JSD}}$ ($\rho=+0.40$), as expected of a transductive
method. The detailed results are presented in \extref{F}{app:divergence-correlations}.

\paragraph{Ablation Study}
To understand the effects of occurrence features and encoder GNN embeddings, we have ablated both on the DBpedia100k dataset. Removing embeddings degrades the median q-error from 3.22 to 4.55, and removing the occurrences degrades it to 3.92. Details can be found in \extref{G}{app:ablation}.\\

\noindent In summary, across KGs, cardinality regimes, and query shapes, FICE consistently yields the best median estimates and the best tail behavior for the majority of datasets. This empirically demonstrates that it can be used as an effective, fully inductive cardinality estimator across various graph sizes, query shapes, and unseen graphs without any retraining.

\subsection{Application to Join Ordering}
\begin{table}[t!]
  \centering
  \caption{Join ordering results on held-out DBpedia100k: p-error and $c$-out ratio of \textsc{FICE} vs.\ QLever's planner with Bootstrap 95\% CIs.}
  \label{tab:ranking-results}
  \footnotesize
  \setlength{\tabcolsep}{2pt}
  \begin{tabular}{l r r r r c}
    \toprule
    & \multicolumn{4}{c}{p-error}
    & \multicolumn{1}{c}{$c$-out ratio (\textsc{FICE} / \textsc{QLever})} \\
    \cmidrule(lr){2-5} \cmidrule(lr){6-6}
    Estimator & geomean (CI) & p90 & p99 & max & geomean (CI) \\
    \midrule
    \textsc{FICE} & \textbf{1.75 [1.62,\,1.89]} & \textbf{5.42}  & \textbf{195.01} & \textbf{349.72}    & \textbf{0.79 [0.71,\,0.87]} \\
    \textsc{QLever}        & 2.22 [2.00,\,2.47]         & 12.33          & 342.83          & 4{,}821.75         & 1.00 (def.) \\
    \bottomrule
  \end{tabular}
\end{table}
To gauge if FICE can improve join ordering, we compared it to the native planner of the production-grade triplestore \emph{QLever}~\cite{qlever}.

During join-ordering, closely related sub-queries that differ by a single triple-pattern must be ranked by cardinality. However, the regression loss alone optimizes absolute cardinalities (the standard objective in prior work) but does not necessarily optimize the relative order of similar subqueries. We therefore extend FICE with an auxiliary pairwise \emph{ranking loss} over sibling sub-queries. For that, we extend the training set with query sibling batches: subqueries differing only by one triple pattern. The full details can be found in \extref{H}{app:joinordering}. In our experiments, adding the ranking loss does not degrade cardinality estimation: on held-out DBpedia100k, the combined loss achieves a median q-error of $3.23$ compared to the $3.22$ of regression-loss only.

We compare this adapted version against the native planner of the production-grade triplestore \emph{QLever} in terms of 
$C_{\text{out}}$~\cite{cout} cost (the sum of all intermediate join cardinalities) on held-out DBpedia100k. For each query, we run a dynamic programming (DP) enumerator~\cite{DP,dpccp} with the estimates of FICE, and compare the resulting plan to QLever's generated plan, which also uses DP. We evaluate the $C_{\text{out}}$ ratio (FICE / QLever), i.e., the relative cost, as well as the \emph{p-error}~\cite{perror}, which is the $C_{\text{out}}$ cost ratio of the generated plan compared to the optimal plan. As Table~\ref{tab:ranking-results} shows, FICE's plans have on average a $21\%$ lower $C_{\text{out}}$ (geomean ratio $0.79$), and a p-error of $1.75$ compared to $2.22$ and, important for a production
planner, a $14\times$ smaller worst-case ($349.7$ vs.\ $4{,}822$). This shows that FICE can improve the plan estimates and prevent catastrophic tail misorderings on unseen graphs in a real triplestore.

%
\subsection{Cardinality Estimation Latency}
\begin{figure}[t!]
\includegraphics[width=\textwidth]{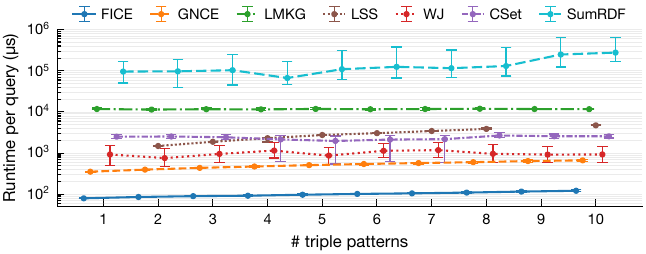}
\caption{Estimation latency (median with 25th–75th percentile error bars) of all compared approaches for increasing number of triple patterns, evaluated on all query shapes on DBpedia100k.} \label{fig:runtime}
\end{figure}
Lastly, we compare the latency of all approaches from receiving a query, loading embeddings (in case of learned approaches), transforming it into the necessary input format, running the estimation, and obtaining a cardinality estimate. For FICE and GNCE, we calculate and store the embeddings in a memory-mapped binary file and load the ones needed in the current query from it. This is a scalable approach, since large amounts of embeddings do not need to be kept in RAM. Generating FICE embeddings for all DBpedia100k entities/predicates takes 27.4s, and the resulting embeddings total 50MB. For LMKG, since no embeddings are involved, only query graph creation and model execution are measured. For WJ, we load the KG adjacency index once and, per query, measure the random-walk sampling and the final estimate based on those walks. For CSet and SumRDF, we load the precomputed summaries and then, per query, measure only the time to produce a cardinality estimate. Those measurement strategies closely resemble the latency of the methods when integrated into a real triplestore. 

Figure~\ref{fig:runtime} shows the runtime of the approaches on the DBpedia100k dataset, grouped by increasing query size, calculated over all different query shapes. FICE shows the lowest latency and negligible interquartile range, even compared to GNCE, which uses a similar GNN to map the query graph to a cardinality estimate. The reason for that is that the FICE decoder uses a simpler message-passing function than GNCE, with fewer embedding dimensions. Further, we have compiled the model and used an optimized embedding loading and query graph creation compared to GNCE. While latencies increase with query size for FICE (and all other approaches), it maintains latencies below $\sim$\qty{120}{\micro\second} even for queries with 10 triple patterns, making it practically usable as an online cardinality estimator in a production triplestore.

\section{Conclusion}
\label{sec:conclusion}
We presented \emph{FICE}, the first fully inductive learned cardinality estimator for BGP queries over RDF knowledge graphs, based on an encoder-decoder GNN architecture operating on a factor graph view of the KG. The learned embeddings are specialized for cardinality estimation, generalize to unseen graphs, and decouple the embedding generation from cardinality decoding, enabling sub-ms estimation latency. Across 10 KGs of varying sizes and densities and 8 query shapes, FICE dominates all compared approaches in terms of q-Error, exhibits the lowest estimation latency, and improves over a production-grade triplestore optimizer in terms of p-error.
\paragraph{Limitations and Future Work.}
We observed that FICE tends to underperform on Wikidata, which is significantly larger and denser than most remaining training graphs. Thus, we plan to train FICE on a larger, more diverse set of KGs. Secondly, to improve estimation on dense graphs and cyclic queries (which can be explained by an oversmoothing GNN encoder) we will add Transformer-based encoding in tandem with discriminative initial features (e.g. laplacian eigenvalues). Finally, we will integrate FICE into a production triplestore optimizer to measure its impact on query execution time.
    \paragraph*{Supplemental Material Statement:} Source Code and datasets are available online\footnote{\url{https://github.com/TimEricSchwabe/fully-inductive-cardinality-estimation}}.
\section*{Declaration of use of Generative AI}
Generative AI has been used to assist with and check the code for this submission. It has also been sparingly used to revise formulations in the manuscript of the paper.

\bibliographystyle{splncs04}
\bibliography{bibliography} 

\ifextended
\newpage
\appendix
\renewcommand{\theHsection}{appendix.\Alph{section}}
\section{Preliminaries}
\label{app:preliminaries}

\paragraph{Graph Neural Networks.} A GNN~\cite{DBLP:journals/tnn/ScarselliGTHM09,geomlearning,graph-hamilton}, produces a numerical vector,  i.e., embedding, for each node of a graph. At each of the $L$ layers, every node updates its own embedding by aggregating the embeddings of its immediate neighbors, together with edge features between them. Hence, after $L$ layers, the embedding of a node encodes its $L$-hop neighborhood. A GNN is \emph{inductive} if its parameters do not depend on the identity of specific nodes or edges and can thus be applied to unseen graphs.

\paragraph{Knowledge Graphs.}In this work, we focus on knowledge graphs modeled using RDF~\cite{rdf11-concepts}. Formally, we define a knowledge graph as follows~\cite{DBLP:conf/semweb/PerezAG06}:
\begin{definition}[Knowledge Graph]
Let $I$ be the set of IRIs, $B$ be the set of blank nodes, and $L$ be the set of literals, with $I,B,L$ pairwise disjoint. 
A \emph{Knowledge Graph} is a finite directed edge-labeled multigraph $G=(V,R,E)$, where $V \subseteq I \cup B \cup L$ is the set of entities, $R \subseteq I$ is the set of relations, and $E \subseteq (I \cup B) \times R \times V$ is the set of edges. An edge $(s,p,o) \in E$ is also called a \emph{triple} with subject $s$, predicate $p$, and object $o$.

%
\end{definition}
RDF graphs are commonly queried using SPARQL~\cite{sparql11-query}. While it offers an expressive algebra with several different operators, we focus on \emph{basic graph patterns}, as they are the focus of cost-based query optimization (specifically, join ordering). Given the set of variables $X$ such that $X \cap (I \cup B \cup L) = \emptyset$, we define a \emph{triple pattern} as a tuple $tp=(x,y,z) \in (I \cup B \cup X)\times (I \cup X)\times (I \cup B \cup L \cup X)$. The \emph{result} to \emph{matching} a triple pattern $tp$ over $G$ is the set of solution mappings $\Omega_{tp}=\{\mu: vars(tp) \rightarrow I \cup B \cup L \ | \mu(tp) \in G\}$. I.e., a mapping is a function mapping from the variables in $tp$ to $I \cup B \cup L $.

A basic graph pattern $Q$ is a conjunction of several triple patterns, i.e. $Q=\{tp_1,\ldots,tp_n\} \subseteq (I \cup B \cup L \cup X)^3$. A mapping  $\mu : \mathrm{vars}(Q) \to (I \cup B \cup L)$ induces the graph $\mu(Q) := \{\mu(tp) \mid tp \in Q\}$. Here, $\mu(tp)$ denotes the triple obtained by replacing each variable $v$ in $tp$ by $\mu(v)$. We call $\mu$ a \emph{solution} of $Q$ over $G$ iff $\mu(Q) \subseteq G$. Then, the \emph{query result} is the set $\Omega_Q := \{\mu \mid \mu(Q)\subseteq G \ \wedge\ \mathrm{dom}(\mu)=\mathrm{vars}(Q)\}.$
Importantly, we can then define the cardinality of a query as
\begin{definition}[Query Cardinality]
The cardinality of $Q$ over $G$ is the number of distinct solutions:
\[
    \lVert Q \rVert_G := |\Omega_Q|.
\]
\end{definition}

\section{Proof of Theorem 1}
\label{app:proof}

\begin{figure}[h!]
\includegraphics[width=\textwidth]{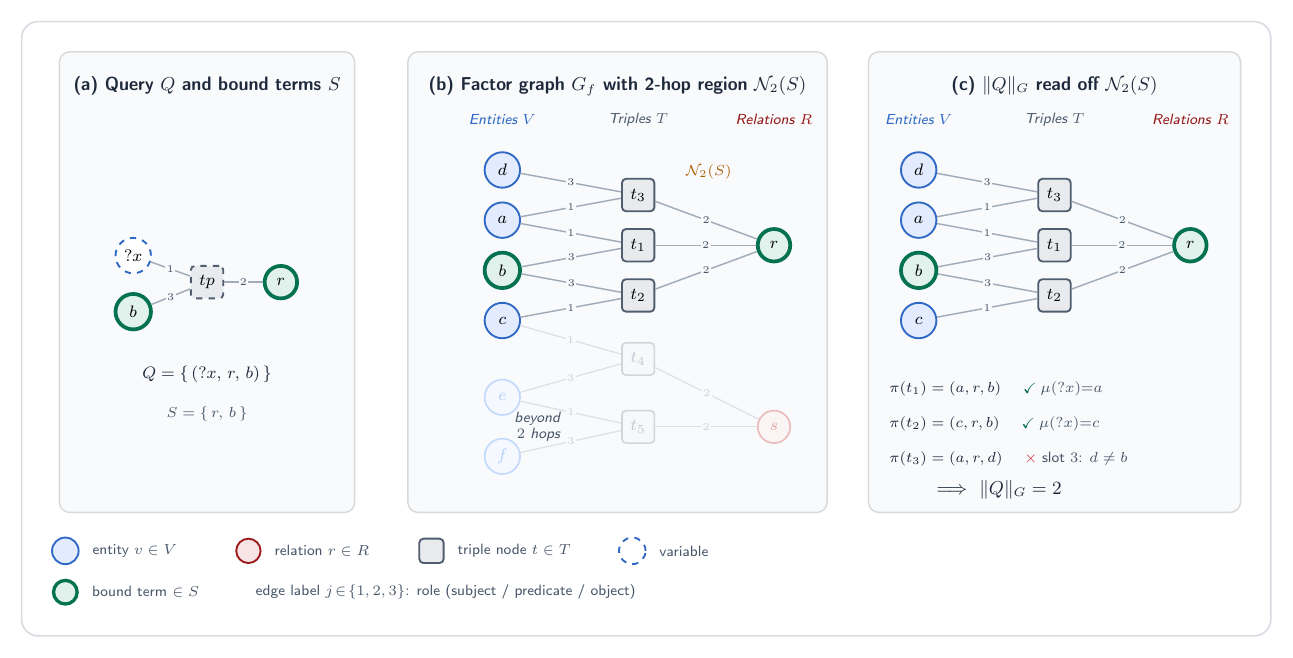}
\caption{Illustration for the proof on locality. All triple nodes and hence subjects, predicates, and objects that can potentially fulfill a BGP query $Q$ are within 2 hops of the bounded terms $S$ appearing in the query within $G_f$.} \label{fig:locality}
\end{figure}

\paragraph{Intuition}
The proof proceeds in two stages: First, we show that each per triple-pattern result set $\Omega_i(G)$ depends on $G_f$ only through the 2-hop induced subgraph around the bound terms in the triple pattern. And secondly, since the overall BGP solution set is the join $\Omega(Q,G) = \Omega_1(G) \bowtie \cdots \bowtie \Omega_k(G)$, and the join is a deterministic function of its input sets, the join itself, and its cardinality are just a function of the 2-hop subgraph. The theorem thus motivates the use of a local GNN with a limited receptive field (two suffice in theory; we use four in $f^E$ so a relation/entity node can see others directly) to estimate the cardinality. The intuition of the FICE model is that the encoder can encode/approximate the subgraph around the bound nodes, while the decoder GNN approximates the join between encoded neighborhoods. Note that the theorem does not guarantee learnability of this function by the encoder-decoder model, but instead motivates why a local message-passing GNN is used to encode embeddings. Also note that while the proof restricts to triple patterns with at least one bound term, FICE is readily applicable to BGPs with triple patterns with only variables.

Figure~\ref{fig:locality} illustrates the rationale of the proof. Panel (a) shows a single triple pattern $(?x,r,b)$ with bounded predicate and object. Panel (b) shows the 2-hop induced subgraph around $b$ and $r$, which includes all triple nodes whose triples could potentially answer the query. Panel (c) shows that the cardinality can be computed from this subgraph by projecting the triple nodes and its adjacent neighbors to triples via $\pi$, checking against the triple pattern and counting all valid ones.

\paragraph{Preliminaries}
Given the factor graph $G_f$ with nodes $V_f$ and a set of nodes $S \subseteq V_f$, the \emph{n-hop induced subgraph} of $S$ in $G$ is
\begin{equation}
    \mathcal{N}_n(S) = G_f\Big[ \{u \in V_f: \min_{b \in S} d_{G_f}(u,b) \leq n\}\Big].
\end{equation}
Here, $d_{G_f}$ is the shortest path distance in $G_f$. Note that $\mathcal{N}_n(S)$ is a role-labeled multigraph and preserves the parallel edges between vertices (in cases of self-loop triples). Next, note that there exists a bijection between the triple nodes in $G_f$ and the triples in $G$: $\pi: T \rightarrow E, t \mapsto (s(t), p(t), o(t))$

Following standard SPARQL notation, we define the solution set of a triple pattern $tp_i$ as $\Omega_i = \{\mu: \text{dom}(\mu) = \text{vars}(tp_i), \mu(tp_i) \in E\}$ and the solution set of a BGP with $k$ triple pattern as $\Omega(Q, G)= \Omega_1(G) \bowtie \ldots \bowtie \Omega_k(G)$, where $\bowtie$ is the join. For clarification, the bound terms in $Q$ are the non-variable terms. We also extend the definition of solution mappings $\mu$ to bound terms via the identity, i.e. $\mu(b)=b, \forall b \in (I \cup L)$. We now proceed with the proof of the theorem.
\begin{proof}
    Given the triple nodes $T$ in $G_f$, define $T_S = T \cap \mathcal{V}(\mathcal{N}_2(S))$, where $\mathcal{V}(\mathcal{N}_2(S))$ is the vertex set of the 2-hop induced subgraph. We show that for each triple pattern $tp_i \in Q$,
\begin{equation}
    \label{eq:set-equivalence}
    \Omega_i(G) = \{\mu: \text{dom}(\mu)=\text{vars}(tp_i), \mu(tp_i)=\pi(t) \text{ for some } t \in T_S\},
\end{equation}
by showing inclusion in both directions:\\
$(\subseteq):$ Let $\mu \in \Omega_i(G)$ and set $t=\pi^{-1}(\mu(tp_i)) \in T$, which exists since $\pi$ is a bijection. By assumption, there exists a bound term in $tp_i$. Assume the bound term $b$ is in position $j$.  Since $\mu$ is the identity on bound terms, the $j$-th component of $\pi(t)=\mu(tp_i)$ equals $b$. Hence, $t$ is adjacent to $b$ in $G_f$ and thus $d_{G_f}(t,S)=1$ and $t\in T_S$.\\
$(\supseteq):$ By construction, $\mu(tp_i)=\pi(t)$ and $\text{dom}(\mu)=\text{vars}(tp_i)$ for some $t$, and hence $\mu(tp_i)=\pi(t) \in \pi(T) =E$, so $\mu \in \Omega_i(G)$.

It remains to show that the right-hand side of~\eqref{eq:set-equivalence} depends on $G$ only through $\mathcal{N}_2(S)$. Since $G_f$ is bipartite with parts $T$ and $V \cup R$, and for all bound terms it holds that $S \subseteq V \cup R$, so $d_{G_f}(t, S)$ is odd for every $t \in T$. Since also $d_{G_f}(t,S) \leq 2$, it follows that $d_{G_f}(t,S)=1$. The three role-labeled neighbors $s(t),p(t),o(t)$ are then at a distance $\leq 2$ from S and lie in $\mathcal{V}(\mathcal{N}_2(S))$, with their role-edges incident to $t$. Hence, $\pi(t)=s(t), p(t), o(t)$ can be reconstructed from $\mathcal{N}_2(S)$ for every $t \in T_S$ and thus \eqref{eq:set-equivalence} shows that $\Omega_i(G)$ is a function of $\mathcal{N}_2(S)$ alone.

Finally, since the full BGP solution set $\Omega(Q, G)=\Omega_1(G) \bowtie \ldots \bowtie \Omega_k(G)$ is a deterministic function of only the per-pattern result set, its cardinality $\lVert Q\rVert_G=\lvert\Omega(Q, G)\rvert$ depends on $G$ only through $\mathcal{N}_2(S)$, i.e. $||Q||_G=||Q||_{G_S}$.
\qed
\end{proof}

 \section{Model and Training Details}
 \label{app:architecture}

 Here, we detail the model architecture of both the encoder and decoder GNN and provide further details on the training scheme that the main text does not cover. A detailed illustration of the encoder and decoder GNNs is given in figures~\ref {fig:encoder} and~\ref {fig:decoder}.

 \subsection{KG Encoder $f^E$}
 The encoder GNN $f^E$ produces a $d=128$-dimensional
 embedding for every node (i.e., entity, relation, or triple). However, only the entity and
 relation embeddings are subsequently utilized by the decoder.

 \paragraph{Input features.}
 For every node $v \in G_f$, we build a 3-dimensional raw feature vector
 $[z_{\text{base}}\,\|\,z_{\text{deg}}\,\|\,z_{\text{seed}}]$, where
 $z_{\text{base}}=0.1$ is a constant,
 $z_{\text{deg}}=\log(1+\deg(v))$ is the scalar log-degree, and
 $z_{\text{seed}}=0$ is used throughout. The seed channel is
 retained only to keep the input projection shape-compatible with the option to condition
 the embeddings on the query (which is not used in this paper).
 The raw vector is projected to $d=128$ by a linear
 layer, followed by layer-norm and SiLU activation with a dropout rate of $p=0.1$.

 Next, a learned \emph{node-type} embedding $\mathbb{R}^{3\times d}$ is added to the
 projected features to distinguish entity, relation, and triple nodes
 (indices $0,1,2$ respectively). Note that those are graph-agnostic since they only 
 encode node roles that are present in any KG.

 The directional edge attributes $\rho$ are projected to $\mathbb{R}^d$ by a
 single linear layer and are provided as edge features to every message-passing
 layer.

 \paragraph{Message passing.}
 The role of message-passing is to propagate the node's current information/features
 to its neighbors, so they can encode what is around them. Since the messages that a
 triple node sends to a subject might be very different than what it should send to
 a relation node, we are using a relational message-passing layer that learns 
 separate parameters for each type of role (subject, predicate, or object connection).
 Concretely, $f^E$ consists of $L_E=4$ stacked \emph{relational GINE} layers.
 Layer $\ell$ takes node features
 $\mathbf{h}^{(\ell-1)}\in\mathbb{R}^{N_f\times d}$, edge indices, edge
 features $\mathbf{e}$, and the \emph{absolute} role $|\rho|\in\{1,2,3\}$ for each edge.
 For each role $r$, a role-specific message MLP
 \[
 \phi_r(\mathbf{h}_u,\mathbf{e}_{uv}) = \sigma\!\bigl(W_r\,[\mathbf{h}_u\,\|\,\mathbf{e}_{uv}]\bigr),\quad
 \sigma=\text{SiLU},
 \]
 computes the individual messages, which are then summed per destination node.
 The final node update applies a shared MLP to the aggregated message:
 \[
 \mathbf{h}^{(\ell)}_v = \text{MLP}_{\text{upd}}\Bigl(
 (1+\varepsilon)\,\mathbf{h}^{(\ell-1)}_v + \sum_{r\in\{1,2,3\}}\sum_{u:(u,v)\in E_f^r}\phi_r(\mathbf{h}_u,\mathbf{e}_{uv})
 \Bigr),
 \]
 where $\text{MLP}_{\text{upd}}$ is a 2-layer MLP with SiLU activation function and a dropout rate of $0.1$, and $\varepsilon$ is a learnable scalar initialized to $0$. After every message-passing layer, we apply layer normalization
 and a residual connection for stability, as well as a further dropout step with rate $0.1$.

The node features after the last layer yield the final entity and relation
 embeddings $\mathbf{H}_V(G)\in\mathbb{R}^{|V|\times d}$ and
 $\mathbf{H}_R(G)\in\mathbb{R}^{|R|\times d}$.

 \subsection{Query Decoder $f^Q$}
 The decoder $f^Q$ is a 2-layer GINE~\cite{gine}-based GNN operating on the query graph $G_Q$
 with node/edge features. The per-node feature dimension
 is $d_Q = d+3 = 131$ (128-dim encoder embedding concatenated with the
 log-transformed occurrence triple $[o_s,o_p,o_o]$. As stated, variable feature dimensions are all set to zero. Edge features have the same $131$ dimensions, since again the 128-dimensional embeddings for
 relations as well as the log-occurrences of the relations are used.

 \paragraph{Message passing.}
We use a standard GINE message passing in the decoder. Unlike $f^E$, we use a single MLP that operates on the sum of all neighbor feature vectors for a given node. Hence, node features update like
 \[
 \mathbf{h}^{(\ell)}_v = \text{MLP}_{\text{upd}}\Bigl(
 (1+\varepsilon)\,\mathbf{h}^{(\ell-1)}_v + \sum_{u \in \mathcal{N}_v}\text{ReLU}(\mathbf{h}_u +\mathbf{e}_{uv})
 \Bigr),
 \]
 The Update-MLP once more has 2 layers with SiLU activation.

 The message-passing layers map the node features from $131\!\to\!131$ for the first layer and $131\!\to\!d_R=200$
 for the second (read-out) layer. After each layer, we apply SiLU and
 layer normalization. No skip-connections are used for the decoder, as we found that this hurts the model performance.

 \paragraph{Readout.}
 We use attentional pooling~\cite{attention-pooling} with a gate
 network, yielding a single $\mathbb{R}^{200}$ vector $\mathbf{r}$ representing the query. Concretely, the attentional pooling collapses all individual node feature $\mathbf{h}^{(2)}$ vectors like
 \[
 \mathbf{r} = \sum_{n=1}^{N} \mathrm{softmax} \left(
        h_{\mathrm{gate}} ( \mathbf{h}_n^{(2)} ) \right) \cdot
        \mathbf{h}_n^{(2)} ,
 \]
 The gate network is another small 2-layer MLP with SiLU activation.

 \paragraph{Head.}
 The pooled vector is concatenated with the log-graph size $\log(1+|E(G)|)$
 and passed through a final 2-layer MLP($\text{Lin}_{201\to 50}\!\to\!\text{SiLU}\!\to\!\text{Lin}_{50\to 1}$). The 
absolute value of the scalar output of this network represents the log-cardinality estimate.

 \subsection{Training}
 \paragraph{Loss.}
 We minimize the Smooth L1 (Huber) loss with $\beta=1$ between the predicted
 and ground-truth log-cardinalities, where the ground-truth is
 $\log(1+\lVert Q\rVert_G)$ (we use $\log1p$ to keep empty-result queries
 numerically stable).

 \paragraph{Optimizer.}
 AdamW ($\beta_1=0.9$, $\beta_2=0.999$) with learning rate $10^{-4}$ and
 weight decay $0$. All gradients are clipped to a global norm of $1.0$ for training
 stability.

 \paragraph{Batching and sampling.}
 Training uses a batch size of $32$ queries. As stated in the main body, applying $f^E$ on the
 full factor graph is intractable for multi-million-edge KGs, which is why we run
 neighborhood sampling with fan-out $[10,10,10,10]$ around the set of
 entities and relations referenced by the current query batch, matching the
 number of message-passing layers $L_E=4$. Query batches are drawn from a
 multi-graph loader that weights each (dataset, query-set) pair by its size,
 so that, in every epoch, the model sees queries from all training KGs in proportion to their population sizes.
 This prevents overfitting to particular graphs that might be overrepresented in the query datasets.

 \paragraph{Training schedule.}
 Models are trained for up to $100$ epochs on a single NVIDIA H200 (140 GB VRAM). A single epoch
 takes approximately 10 minutes on all graphs and query datasets.

\section{Baseline Implementation Details}
We have used the official code provided by the baseline approaches, except for the non-learned approaches. For Wanderjoin, SumRDF, and Characteristic Sets, we use the implementations provided by G-Care\footnote{\url{https://github.com/yspark-dblab/gcare}}. We use the respective linked code repositories for GNCE\footnote{\url{https://github.com/DE-TUM/GNCE}}, LMKG\footnote{\url{https://github.com/DamjanGjurovski/LMKG-Code}}, and LSS\footnote{\url{https://github.com/Kangfei/LSS}}.
As explained, for GNCE, we generate RDF2Vec-embeddings separately for all graphs, and then train jointly on the queries of the $n-1$ training graphs and evaluate on the held-out graph. To enable multi-graph training for LSS, all graphs (including the held-out graph) are combined into a large single graph (which thus has disconnected subgraphs corresponding to the used KGs). Then, we train only on queries corresponding to the training graphs and evaluate by running them over the held-out graph. For LMKG, we generate one large graph by incrementing the ids of entities and relations from graph to graph, and similarly train on the $n-1$ graphs and evaluate on the queries of the held-out graph.

\section{Details on the used datasets}
\label{ap:datasets}
WN18RR is a filtered subset from WordNet and represents a lexical-semantic graph of English synsets with semantic and lexical relations. It is thus a rather homogeneous and structured KG. FB15K237 is based on a subset of Freebase and models broad real-world knowledge across entities and relation types, making it more heterogeneous than WN18RR. SWDF is a scholarly knowledge graph describing papers, people, organizations, and conferences. It is domain-specific and relatively structured/homogeneous. DBpedia100k is a subset of DBpedia that represents broad cross-domain knowledge in a heterogeneous structure. LUBM is a synthetic benchmark generated from a university-domain ontology and is the most homogeneous among the datasets used. Wikidata (subset) is a large, multilingual KG about world knowledge, and thus the most heterogeneous among the tested ones. CoDEx-L is the largest variant of the CoDEx benchmark and is based on Wikidata. AIDS is derived from the NCI AIDS antiviral screening dataset and consists of many small molecular graphs merged into a single RDF graph. Hetionet is a biomedical knowledge graph integrating genes, diseases, compounds, anatomy, and biological processes from multiple databases; it is domain-specific and heterogeneous. Finally, YAGO is the largest KG in our evaluation and covers general-purpose knowledge from diverse sources.

\section{Structural Divergence Analysis}
\label{app:divergence-correlations}

Next, we detail the structural-divergence analysis performed in section~\ref{sec:struct-div}. We want to determine whether the cross-KG transfer of FICE is either just mediated by schema similarity/overlap between the training and held-out graphs, or whether FICE indeed generalizes to structurally novel KGs. Note that any direct IRI overlap between KGs is impossible by construction, as IRIs are never forwarded to the FICE model (the entities/relations are solely encoded by their degree and structural neighborhood in $G_f$). 
\subsection{Structural Metrics}
To do this, we compare the KGs based on four different \emph{structural metrics}:
\begin{description}
\item[Degree (DEG)] The distribution of total entity degree
  $\deg(e) = |\{(e,\cdot,\cdot)\} \cup \{(\cdot,\cdot,e)\}|$, which describes the hubs of the KG.
\item[Characteristic-set size (CSIZE)] For each subject entity $e$,
  the number of distinct predicates $|\mathrm{CS}(e)| = |\{p \,:\, (e,p,\cdot) \in G\}|$, which captures how rich and diverse the entities adjacent edges are on average.
\item[Characteristic-set multiplicity (CSET\_MULT)] For each
  distinct characteristic set, the number of entities that have it.
  This captures the regularity of the schema: e.g., synthetic KGs typically
  have few distinct $\mathrm{CS}$ values and each is shared
  by many entities, while heterogeneous real-world KGs have many rare $\mathrm{CS}$ values.
\item[Predicate-frequency skew (PRED\_FREQ)] For each predicate $p$,
  the normalized occurrence frequency $\mathrm{freq}(p)/|G|$. The resulting distribution
  captures how power-law or uniform predicate usage is.
\end{description}
To summarize the training mixture per held-out KG, we average the fingerprint distributions of all training graphs. Then, we calculate the Jensen-Shannon divergence (jsd) between the distribution of the held-out graph and the training mixture for each structural metric, and average them to obtain a scalar $\overline{\mathrm{JSD}}(k)$.

Table~\ref{tab:divergence-raw} reports the per metric JSDs and their mean for each of the ten KGs. Two groups emerge: a cluster
of open-domain KGs (fb237, dbpedia100k, wikidata, codex\_l, yago, swdf) with
$\overline{\mathrm{JSD}} \le 0.2$, and a set of structurally distinctive KGs, namely aids, lubm, hetionet, and wn18rr, with notably higher divergences. This means that those graphs individually are very different from the combination of the rest of the graphs, while the graphs in the first cluster are all very similar to the combinations of the others. This means that if an approach performs well on one of the divergent graphs, it supports the claim that the approach is genuinely inductive and can generalize to structurally novel graphs.

\begin{table}[h!]
\centering
\caption{Jensen--Shannon divergence between each held-out KG and the mixture of the remaining nine training graphs, per structural fingerprint.
A higher value means more structurally novel relative to the
training mixture.}
\label{tab:divergence-raw}
\footnotesize
\begin{tabular}{lccccc}
\toprule
KG & \textsc{deg} & \textsc{csize} & \textsc{cset\_mult} & \textsc{pred\_freq} & Mean \\
\midrule
swdf        & 0.18 & 0.29 & 0.12 & 0.09 & 0.17 \\
aids        & 0.36 & 0.47 & 0.87 & 0.57 & 0.57 \\
lubm        & 0.05 & 0.33 & 0.85 & 0.31 & 0.38 \\
fb237       & 0.05 & 0.01 & 0.23 & 0.11 & 0.10 \\
codex\_l    & 0.30 & 0.23 & 0.15 & 0.08 & 0.19 \\
dbpedia100k & 0.06 & 0.02 & 0.16 & 0.12 & 0.09 \\
wikidata    & 0.06 & 0.01 & 0.16 & 0.14 & 0.09 \\
yago        & 0.16 & 0.33 & 0.09 & 0.14 & 0.18 \\
hetionet    & 0.28 & 0.10 & 0.31 & 0.19 & 0.22 \\
wn18rr      & 0.08 & 0.34 & 0.19 & 0.25 & 0.21 \\
\bottomrule
\end{tabular}
\end{table}

\subsection{Correlations with per-KG q-error}

Table~\ref{tab:divergence-correlations} reports Pearson correlations of each
method's log median q-error against (a) mean JSD, (b) $\log |G|$, and (c)
mean JSD after partialling out $\log |G|$. The partial correlation to factor out the influence on the graph size is calculated as ($X$= mean JSD, $Y$=median q-error, $Z$=graph size).

\[
\rho_{XY \mid Z} \;=\; \frac{\rho_{XY} - \rho_{XZ}\,\rho_{YZ}}{\sqrt{(1-\rho_{XZ}^2)(1-\rho_{YZ}^2)}},
\]

The results support what is theoretically expected: WanderJoin is negatively correlated with graph size ($\rho=-0.81, p=0.008$); its sampling gets more accurate with larger graphs, and its q-error correlation to JSD is small and nonsignificant ($\rho=-0.13, p=0.72$); there is no inductive learning step. Similarly, CSet shows no significant correlation(JSD:$\rho=0.26,p=0.47$, size:$\rho=0.19,p=0.60$), as its statistics are recomputed per graph. For LMKG, the only learned baseline whose parameters are vocabulary-specific (no meaningful embeddings), the q-error correlates with JSD ($\rho=0.4,p=0.25$), which is expected from a transductive method (though non-significant). GNCE's correlation with JSD is non-significant ($\rho=-0.44,p=0.21$), while LSS has no correlation ($\rho=-0.06, p=0.88$). Lastly, FICE shows a statistically significant negative correlation with JSD ($\rho=-0.65,p=0.04$), which is sustained even after correcting for graph size ($\rho=-0.7,p=0.03$) and a nonsignificant correlation with graph size($\rho=0.34,p=0.34$). It needs to be added that those results are obtained with a small sample size of only 10 graphs, and thus need to be interpreted cautiously.

\begin{table}[t]
\centering
\caption{Pearson correlation of $\log$ median q-error with mean JSD,
$\log |G|$, and partial JSD $\mid \log |G|$.}
\label{tab:divergence-correlations}
\footnotesize
\begin{tabular}{lccc}
\toprule
Method & $\rho(\log\mathrm{qE},\,\overline{\mathrm{JSD}})$
       & $\rho(\log\mathrm{qE},\,\log |G|)$
       & $\rho(\log\mathrm{qE},\,\overline{\mathrm{JSD}} \mid \log |G|)$ \\
\midrule
FICE  & \textbf{$-0.65\ (0.044)$} & $+0.34\ (0.34)$          & \textbf{$-0.70\ (0.036)$} \\
GNCE  & $-0.44\ (0.21)$           & $+0.26\ (0.47)$          & $-0.46\ (0.21)$          \\
LMKG  & $+0.40\ (0.25)$           & $+0.39\ (0.27)$          & $+0.42\ (0.26)$          \\
LSS  & $-0.06\ (0.88)$           & $-0.02\ (0.95)$          & $-0.06\ (0.89)$          \\
WJ    & $-0.13\ (0.72)$           & \textbf{$-0.81\ (0.005)$} & $-0.18\ (0.65)$          \\
CSet  & $+0.26\ (0.47)$           & $+0.19\ (0.60)$          & $+0.26\ (0.50)$          \\
SumRDF  & $-0.47\ (0.17)$           & $+0.07\ (0.81)$          & $-0.47\ (0.20)$          \\

\bottomrule
\end{tabular}
\end{table}

\section{Ablation Study}
\label{app:ablation}
Table~\ref{tab:ablation} shows the effect of either removing the embeddings from the encoder GNN or the occurrence features from FICE on the held-out DBpedia100k dataset. When ablating the embeddings, all embedding dimensions except for the three count dimensions are set to zero. Hence, the decoder GNN can only use the query graph structure and the occurrences of the entities/relations as information to predict cardinality. When ablating for occurrences, the three corresponding dimensions are set to zero for all bound entity/relation embeddings. The results show that removing either significantly hurts the performance of FICE. Removing the occurrences, however, hurts less, and the correlation between true and predicted cardinality is retained. This is expected, since the embeddings hold important co-occurrence information that greatly influences the selectivity and thus cardinality of joins in the query. Further, absolute counts can at least be approximated from the sampled neighborhood the embeddings encode (as per Theorem 1 in the main paper). Together, the ablation results motivate why we use both the learned embeddings via the encoder GNN and the cheap occurrence features in FICE.

\begin{table}[t]
\centering
\caption{Ablation Study of FICE with dbpedia100k as the held-out dataset. Either removing the embeddings from the encoder GNN or the occurrence features hurt the model.}
\label{tab:ablation}
\begin{tabular}{lcc}
\toprule
Method & Median $qE$ & log-Pearson $r$ \\
\midrule
FICE & \textbf{3.22} & \textbf{0.85} \\
FICE w/o embeddings & 4.55 & 0.76 \\
FICE w/o occurrences & 3.92 & 0.84 \\
\bottomrule
\end{tabular}
\end{table}

\section{Application to Join Ordering}
\label{app:joinordering}
    The cardinality regression loss of FICE alone does not suffice to serve as a good estimator to aid in join ordering. The reason is that join ordering does not depend on absolute cardinalities but on the relative ordering of adjacent sub-queries, as shown in relational databases by Negi et al.~\cite{flowloss}. During ordering, the optimizer needs to compare very related sub-queries that differ by a single triple pattern. The cardinality differences between those are often small, and a pure regression loss drives the model to get cardinality predictions into the right order of magnitude, but not necessarily to differentiate similar sub-queries. To address this, we add an auxiliary \emph{pairwise ranking loss} over sub-queries of a query, related in spirit to what \emph{Lero} is doing in relational databases~\cite{lero}.

\subsection{Pairwise Ranking Loss with Size and Top Weighting}
To apply a ranking loss, we need to extend the existing query datasets to have \emph{sibling} subqueries that can be directly compared. For that, we sample, for each parent BGP, a chain of sub-BGPs of increasing size:
the chain starts at a random single triple and at every step extends by
adding one connected triple, then adding all candidate extensions as
\emph{siblings} of one another up to a maximum number of 15 pairs per depth. This produces a collection of sibling
pairs at every depth of the join tree. Each pair shares the same child
and differs only in the next triple. Ranking these siblings accurately is exactly the target
that FICE needs to fulfill to be beneficial in join ordering.

Let $\hat y_i$ be the model's predicted log-cardinality of sub-BGP $i$,
and $y_i$ its true log-cardinality. For a sibling pair $(i,j)$ with
$y_i < y_j$ (i.e.\ $i$ is the cheaper join), we use a RankNet-style~\cite{RankNet}
binary cross-entropy on the score difference,
\[
\ell_{ij} \;=\; \mathrm{BCE}_{\mathrm{logits}}\!\bigl(\hat y_j - \hat y_i,\,1\bigr)
\;=\; \log\!\bigl(1 + e^{-(\hat y_j - \hat y_i)}\bigr).
\]
This smoothly penalizes wrong rankings and fades to zero once the
predicted order matches the true order.

Additionally, we weight pairs by two multiplicative terms. Let $|s_i|$ denote
the size (number of triples) of sub-BGP $i$. The \emph{size weight}
\[
w^{\mathrm{size}}_{ij} \;=\; \bigl(|s_i|\cdot|s_j|\bigr)^{\alpha/2}
\]
rebalances the loss toward deeper sub-BGPs. The reason for this is that the
above procedure samples more pairs
at shallow depths than deep ones. However, the more difficult choices are typically
deeper within the join tree, and the goal is to focus FICE on those.
Further, we want FICE to focus on ordering the siblings at and near the argmin of
the subquery cardinalities. This is arguably more important than ranking the worst siblings
correctly. Let $r_i$ denote
the rank of $i$ within its sibling group sorted by ascending true
cardinality (so $r_i = 0$ is the argmin). The \emph{top weight}
\[
w^{\mathrm{top}}_{ij} \;=\; \frac{1}{1 + \beta\,\min(r_i, r_j)}
\]
emphasizes pairs near the argmin.

Together, the ranking loss, summed over all pairs $\mathcal{S}$ (that are siblings) in a
batch and normalized by the above weights, is
\[
\mathcal{L}_{\mathrm{rank}} \;=\;
\frac{\sum_{(i,j)\in\mathcal{S}}
       w^{\mathrm{size}}_{ij}\,w^{\mathrm{top}}_{ij}\,\ell_{ij}}
     {\sum_{(i,j)\in\mathcal{S}}
       w^{\mathrm{size}}_{ij}\,w^{\mathrm{top}}_{ij}},
\]
where $\mathcal{S}$ is the set of sibling pairs $(i,j)$ with a shared
prefix and $y_i < y_j$. The total loss then combines the original
log-cardinality regression loss with the ranking loss:
\[
\mathcal{L} \;=\; \mathcal{L}_{\mathrm{card}} \;+\; \lambda\,\mathcal{L}_{\mathrm{rank}}.
\]
We use $\alpha = 1.0$, $\beta = 1.0$, and $\lambda = 3.0$ in our experiments below.
Keeping the cardinality regression loss is important for two reasons. First, removing it leads
to model collapse, where the model predicts the same cardinality for all siblings, since this 
leads to a very small ranking loss term. The cardinality regression loss prevents this by grounding the
predictions in real cardinalities. Secondly, accurate cardinality prediction is still important apart from
join ordering, e.g., as information for choosing the right join operators.

\subsection{P-error}
A common metric to determine the quality of plans found by different estimators is the \emph{p-error}~\cite{perror}.
Let $Q$ be a BGP of $n$ triples and $P$ be a plan (i.e., join ordering) for this plan. The \emph{$C_{\text{out}}$} cost~\cite{cout} recursively defines the cost of a join $v$ as
\begin{equation}
    C_{\text{out}}(v) = |\Omega(v)| + C_{\text{out}}(l) + C_{\text{out}}(r).
\end{equation}
The cost of leaf nodes (i.e., triple patterns) is defined as zero. Hence, this cost equals the sum of
all join output cardinalities. It is a simple yet good logical cost model, as it reflects the total amount of
results that are internally materialized during query execution.

The p-error now compares the cost of a plan $\widehat{P}$ guided by cardinality estimates $\widehat{C}$, evaluated by true cardinalities $C$, i.e. $C_{\text{out}}(\widehat{P}, C)$ with the cost of the optimal plan $P^*$, evaluated by true cardinalities, i.e. $C_{\text{out}}(P^*, C)$. It is given as
\begin{equation}
    \text{p-error}(Q) = \frac{C_{\text{out}}(\widehat{P}, C)}{C_{\text{out}}(P^*, C)}.
\end{equation}
It hence evaluates how many times worse the actual plan found using the estimator is from the optimal plan. A p-error of 1 means the optimizer found the optimal plan, while, e.g., a p-error of 2 means the plan is twice as bad as the optimal plan.
\begin{figure}[t!]
\centering
\includegraphics[width=1\textwidth]{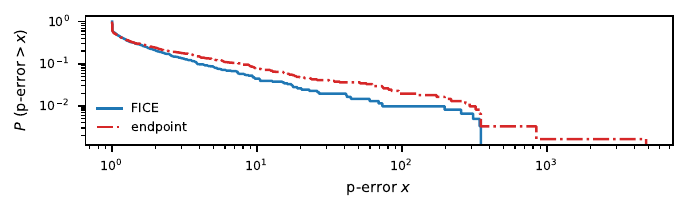}
\caption{Complementary CDF of p-error for FICE and Qlever's optimizer
on held-out DBpedia100k queries.}
\label{fig:ranking-ccdf}
\end{figure}

\subsection{Experimental Setup}
To perform the evaluation, we generated the above-described sibling datasets for all 10 KGs, where we used up to 3000 queries per KG and shape, and for each of those queries, we generated a single linear plan, where at each level, we generated a maximum of 6 siblings. We then train FICE with the combined loss on
nine training KGs from the main experiments, and evaluated on the
held-out DBpedia100k. For each
held-out BGP we ran a $C_{\text{out}}$ DP~\cite{DP,dpccp} enumerator over all connected
sub-BGPs, considering any join-tree shape, bushy or left-deep,
using FICE's predictions, and compared the resulting
plan against the plan emitted by Qlever's native optimizer for the
same BGP. Note that Qlever itself uses bushy enumeration of join trees,
so we are comparing both estimators in the same plan space.
Ground-truth intermediate cardinalities are obtained from
the endpoint directly by execution. Hence, for the $c$-out comparison, both planners' chosen
plans are evaluated under true intermediate sizes.

\subsection{Results}

Table~\ref{tab:ranking-results} summarises the main results. FICE
improves over the endpoint planner on geometric mean p-error and has large
gaps to it in the heavy tail: a $2\times$ smaller p99 ($195.01$ vs.\ $342.83$) and
a $14\times$ smaller worst case ($349.7$ vs.\ $4{,}822$). This shows that
on average, FICE finds a plan closer to the optimal plan significantly more often,
and importantly, it prevents catastrophic failure modes of the internal optimizer (i.e., the p90/p99 tails). This can also be clearly seen in figure~\ref{fig:ranking-ccdf}, which, analogously to the results in section~\ref{subsec:card-est}, shows the CCDF of the p-error.
While both FICE and the internal optimizer are close in the typical range, the tail behavior
of FICE is significantly better. The average ratio of FICE's $C_{\text{out}}$ cost and that of the
native optimizer is 0.79, i.e., the plans of FICE are on average $21\%$ better in terms of cost.

In summary, the results show that with the only addition of the ranking loss, while keeping the rest
of the FICE architecture identical, the approach can enhance the cost of plans compared to a production-grade triplestore optimizer. Those are only first results in the direction of join order optimization, and
the next steps are thus to integrate FICE into triplestores and evaluate how far it can enhance the
overall runtime of query workloads.

\begin{figure}[h!]
\includegraphics[width=\textwidth]{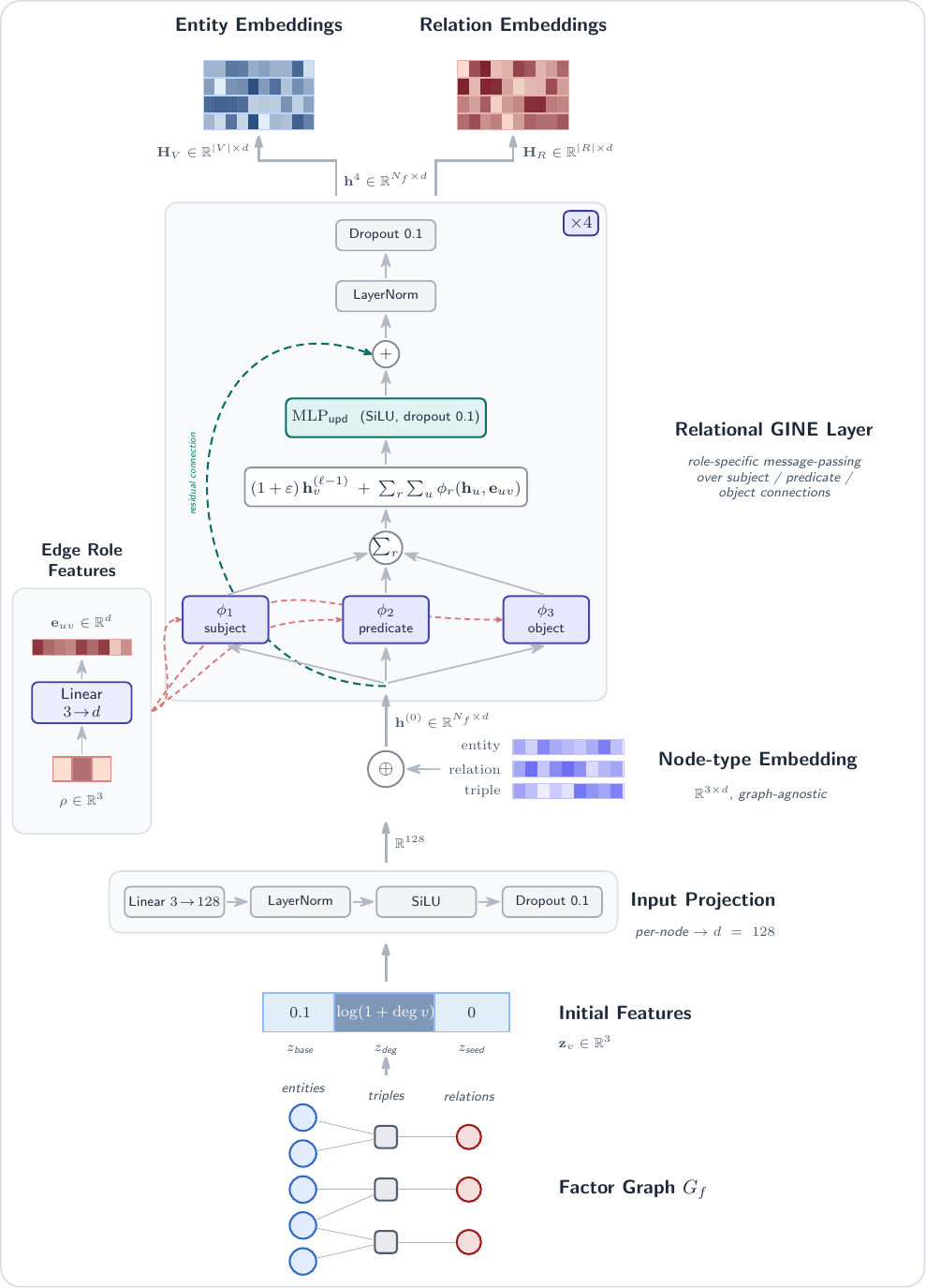}
\caption{Detailed Illustration of the encoder GNN, showing the transformation of nodes in the factor graph into contextualized embeddings for entities and relations.} \label{fig:encoder}
\end{figure}

\begin{figure}[h!]
\centering
\includegraphics[width=0.7\textwidth]{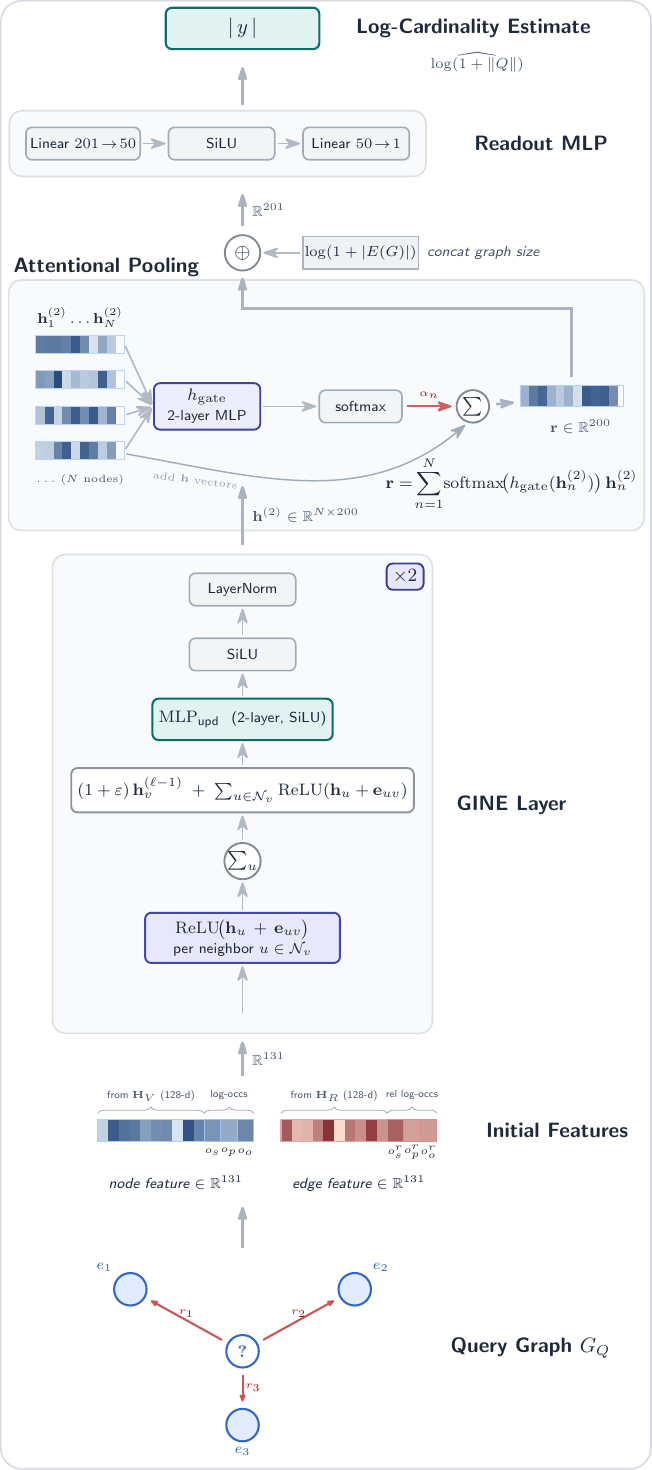}
\caption{Detailed Illustration of the decoder GNN, showing how a query graph gets enhanced with embeddings from $f^E$ and transformed into a scalar cardinality estimate.} \label{fig:decoder}
\end{figure}
\fi
\end{document}